\begin{document}

\title{Multi-objective Consensus Clustering Framework for Flight Search Recommendation}
\titlerunning{Consensus Clustering Framework for Flight Search Recommendation}
%
\author{Sujoy Chatterjee\inst{1,2} \and Nicolas Pasquier\inst{1} \and Simon Nanty\inst{2}\and Maria A. Zuluaga\inst{2,3}}
\authorrunning{S. Chatterjee et al.}

%
\institute{Universit\'e C\^ote d'Azur, CNRS, I3S, UMR 7271, Sophia Antipolis, France\\
\email{\{sujoy.2611@gmail.com, pasquier@i3s.unice.fr\}}
 \and
Amadeus S.A.S, Sophia Antipolis, France \\
\email{\{simon.nanty@amadeus.com\}}
\and
Data Science Department, EURECOM, France\\
\email{\{zuluaga@eurecom.fr\}}
}
\maketitle              
\begin{abstract}
In the travel industry, online customers book their travel itinerary according to several features, like cost and duration of the travel or the quality of amenities. 
To provide personalized recommendations for travel searches, an appropriate segmentation of customers is required. 
Clustering ensemble approaches were developed to overcome well-known problems of classical clustering approaches, that each rely on a different theoretical model and can thus identify in the data space only clusters corresponding to this model.
Clustering ensemble approaches combine multiple clustering results, each from a different algorithmic configuration, for generating more robust consensus clusters corresponding to agreements between initial clusters.
We present a new clustering ensemble multi-objective optimization-based framework developed for analyzing Amadeus customer search data and improve personalized recommendations.
This framework optimizes diversity in the clustering ensemble search space and automatically determines an appropriate number of clusters without requiring user's input. 
Experimental results compare the efficiency of this approach with other existing approaches on Amadeus customer search data in terms of internal (Adjusted Rand Index) and external (Amadeus business metric) validations.
\keywords{Clustering Ensemble \and Multi-objective optimization.}
\end{abstract}

\section{Introduction}

In the travel industry, the volume of sales mainly relates two factors like click rate (i.e., inquiry rate) and translation of it to conversion rate. 
Multiple factors can hinder the customers from purchasing a ticket in spite of searching for a flight itinerary. 
The click to conversation rate can be increased if certain number of offers can be recommended to a certain set of customers. 
Therefore, personalized recommendation based on similarity of customers can be very effective to improve the business strategy of the travel company. 
The customers can be segregated depending on various features like days of advance flight booking, distance covered during the travel, number of their children. 
Hence, clustering algorithms have a major role in order to segment the customers in a better way. 

Clustering algorithms \cite{jain99} are used in travel context to find sets of customers with similar needs and requirements, and to identify hidden relationships between their search query. 
However, traditional clustering algorithm results depend significantly on the algorithmic configuration used, i.e., the algorithm chosen and its parameterization, and its adequacy to the data space properties.
Hence, clustering the same set of customers with different algorithmic configurations can produce significantly different solutions. 

\begin{wrapfigure}{r}{0.34\textwidth}
  \centering
  \includegraphics[width=0.34\textwidth]{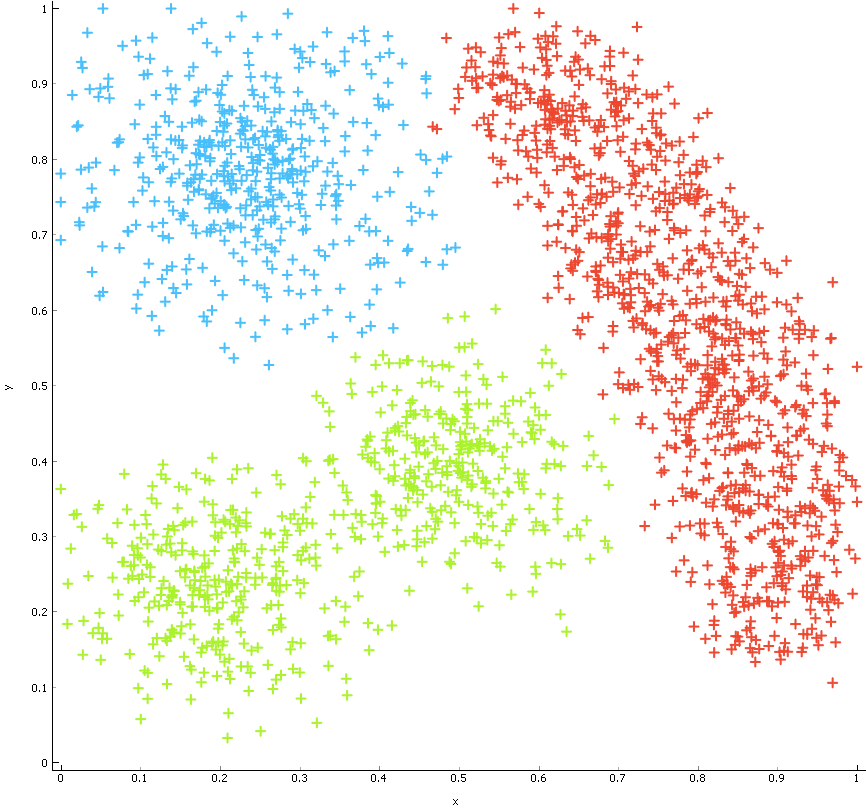}
  \caption{Bi-dimensional data space showing different underlying clustering models.}
  \label{Fig:data_space}
\end{wrapfigure}

Choosing an adequate algorithmic configuration, and defining the final number of clusters as required by most existing clustering approaches, are major practical issues when the required prior knowledge of the data space properties is unavailable.
Consider for example the bi-dimensional data space represented in Fig.~\ref{Fig:data_space} that shows different sub-spaces where groups of objects correspond to different underlying clustering models, e.g., centroid, density or model-based. 
Consensus clustering approaches combine multiple clustering results, obtained from diverse algorithmic configurations, to generate a more robust clustering solution.
By combining the results of algorithms based on different modeling assumptions and for different parameterizations, that is identifying agreements between these \emph{base clusterings} and quantifying the weight of repetitive groups of objects among clusters, these groups of objects can be detected.
Therefore, consensus clustering, that was shown to be an effective approach to generate quality clustering solutions \cite{ayad2008,strehl2002,ZhongYZL15}, is an interesting solution for travel context where little prior knowledge of the evolving data space is available. 
However, to the best of our knowledge, no study on the integration of consensus clustering for better personalized recommendation in travel context was reported in the literature.

In this paper, we study the integration of consensus clustering through a multi-objective optimization process for the clustering of flight search queries. 
This process aims to optimize the selection of flight recommendations that are returned by the Amadeus flight search engine.
For that purpose, a clustering solution is used to segment the space of customers, so that the search engine is optimized independently for each cluster, and customers with different needs and requirements are provided with different recommendations. 
In this context, we have no prior knowledge about the data space modeling assumptions,  like data distribution or natural number of clusters, and choosing an appropriate algorithmic configuration is an important issue.
Using classical clustering approaches requires to apply the different clustering algorithms with different parameterizations multiple times and compute the Amadeus business metric for each clustering solution, implying numerous time consuming computations. 
This metric is the difference between the estimated booking probability of the flight recommendations output by the search engine before and after the optimization of the flight search engine. 
In order to reduce computation cost and optimize the search space exploration, i.e., the potential consensus clusterings, we propose a new ensemble clustering framework.

The clustering ensemble problem is usually posed as an optimization problem where the average similarity of consensus solution with the base clusterings is maximized in order to obtain a better aggregation. 
However, since a consensus solution can be very similar to one base clustering whereas very distant from others, and to remove any kind of bias toward a particular clustering solution, minimizing the standard deviation of these similarity values is also necessary. 
The proposed framework uses a multi-objective clustering ensemble approach that simultaneously optimizes two objective functions \cite{ChatterjeePasquier:2019}:
The maximization of the similarity between the consensus solution and the base clusterings, and the minimization of the standard deviation of these similarities for a consensus solution. 
A formal proof is provided demonstrating that the proposed approach automatically generates a number of clusters at least as appropriate as approaches that consider only co-occurrences of two objects in base clusters for generating a consensus clustering. 
This framework integrates the ensemble clustering solution and a mapping function to categorize a new customer in an appropriate cluster to improve the Amadeus flight search recommendation. 
We present an extensive analysis, by generating diverse sets of base clusterings from multiple perspectives, to demonstrate how the ensemble method performs in terms of consensus solution for customer segmentation as well as improving the Amadeus business metric for better personalized flight recommendation. 
Experimental results with comparative analysis of state-of-the-art methods show that it performs well in majority cases. 
Interestingly, it automatically produces the number of clusters which is close to the best number of cluster produced by other methods that require to specify it each time as input. 

\section{Related Work}
Over the last few years, clustering ensemble has been employed as a useful tool to overcome the drawback of classical clustering algorithms by deriving better clustering solutions by consensus. 
For a given dataset, the clustering ensemble of base clusterings can be produced by applying the same clustering algorithm multiple times with different parameterizations, via sub sampling, or projection of the dataset into different sub-spaces. 
The primary objective of ensemble clustering algorithms is to combine base clustering solutions in such a way that a robust solution is generated to improve the quality of the results compared to base clustering solutions. 
Different approaches have been developed to address this topic \cite{Alqurashi2018,strehl2002,ayad2008,Fred:2002,Liu2014AWS,ZhongYZL15}, and the various state-of-the-art methods can be classified into some major categories described hereafter.

Several approaches consider the clustering ensemble problem as a clustering of categorical data \cite{Nguyen2007}. 
Another category of clustering ensemble methods relies on generating a pair-wise similarity matrix. 
This similarity matrix basically considers the co-association between the objects occurring together in a same cluster for a clustering solution \cite{Fern2003}. 
An alternative method does not rely on an object-object co-association matrix but derives a consensus solution from a cluster association matrix \cite{Mimaroglu2012}. 
Other approaches consider clustering ensemble as a graph, or hypergraph, partitioning problem, and various graph partitioning algorithms were proposed to obtain the consensus solutions \cite{Fern2003,strehl2002}. 
Among graph partitioning based approaches, Strehl and Ghosh modeled this as a hypergraph partitioning problem \cite{strehl2002} and proposed three partitioning approaches: (i) Cluster-Based Similarity Partitioning Algorithm (CSPA), (ii) Hypergraph Partitioning Algorithm (HGPA) and (iii) Meta Clustering Algorithm (MCLA). 
Another recently proposed graph partitioning based novel consensus function namely, weak evidence accumulation clustering (WEAC) and four variants of it \cite{Huang:2015} outperform several other existing baseline approaches. 
The first three variants are basically agglomerative methods, namely average-link (AL), complete-link (CL) and single-link (SL). 
The last one GP-MGLA is a graph partitioning based consensus method. 
However, in all these approaches, the number of clusters is required as input. 
Among other studies, the approach proposed in \cite{Mimaroglu2012} effectively generates a consensus clustering with an automatically defined number of clusters. 
This approach visualizes the base clustering solution as an undirected weighted graph, and Prim's algorithm is adapted to make a minimum-cost spanning tree of the weighted graph. 
Another approach proposed by the same authors also generates automatically the number of clusters, but requires to specify a relaxation parameter \cite{Mimaroglu:2010}. 

\section{Problem Formulation}
Let $X = \{x_1, x_2, \ldots, x_p\}$ denote a set of $p$ customers, where $x_i\in \mathbb{R}^d$, $d$ the number of features used to describe the search query and $Y$ be the set of $n$ clustering algorithms. Here each $x_j$ denotes the customer who searches a query for flight booking. Suppose $C = \{c_1, c_2, \ldots, c_n\}$ be the set of base clustering solutions obtained after applying $n$ clustering algorithms. Each $c_i$ partitions $p$ customers into $k_i$ clusters such that $c_i = \{x_1^{i}, x_2^{i}, \ldots, x_p^{i}\}$, where $c_i$ belongs to the set of all possible partitions of $X$ and $x_j^{i}$ denotes the label of $j^{th}$ customer according to the $i^{th}$ clustering: $x_{j}^i \in \{1, 2,\ldots,k_i\}$ $\forall$ $j \in \{ 1, 2, \ldots, p \}$. Note that, each $c_i$ might comprise  different number of clusters $k_i$. Here, the goal is to derive the best aggregated ensemble solution from the base clustering ones, while automatically determining the number of clusters. In the subsequent sections of the paper we use both the terms customers and objects in an analogous way.

\section{Clustering Ensemble Framework}
The proposed framework integrating consensus clustering optimization in Amadeus flight search engine is presented in Fig.~\ref{Fig:framework}. 
The upper part of the chart shows the creation of the search space, that is the Refined Ensemble Clustering, from the dataset.
The lower part of the chart shows the semi-supervised classification process for learning a customer classification model using consensus clusters multi-objective optimization, combined internal and external validation, characterization of cluster segments and prediction of recommendation class.

\begin{figure}[hbt]
  \centering
  \includegraphics[width=0.9\linewidth]{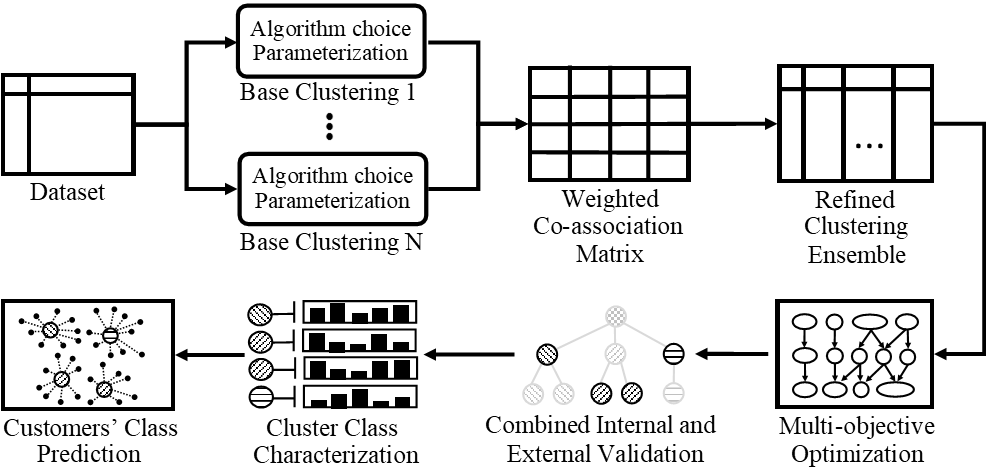}
  \caption{Semi-supervised classification framework for customer segmentation.}
  \label{Fig:framework}
\end{figure}

The central four phases of the proposed ensemble clustering framework are detailed in the following subsections. 
Initially, the base clustering solutions are refined by estimating the maximum number of clusters that can be present in the base clustering. 
To estimate this maximum number of clusters, a weighted co-association matrix \cite{Fred:2002} is constructed and an iterative process is applied on this matrix (Sec~\ref{subsection:matrix}). 
Then, based on the estimated number of clusters, the base clustering solutions are relabeled according to a reference clustering solution (Sec~\ref{subsection:label}), and thus some solutions are refined. 
Ultimately, we combine the original base clustering with the refined set of clustering solutions and thus this ensures the diversity of the clustering solutions. 
In this step, we use a NSGA-II based multi-objective optimization \cite{KDeb2002} to overcome the important complexity of the search for a single solution that is optimal in terms of all the objective functions. 
Hence, in terms of multi-objective optimization, optimality is usually denoted using the concept of Pareto optimality \cite{KDeb2002}. 
A Pareto optimal set basically contains multiple solutions making a trade-off between multiple objectives. 
This NSGA-II method is applied on the set of refined clustering solutions to produce a final set of non-dominated Pareto optimal solutions (Sec~\ref{subsection:multi-objective}). 
Finally, the consensus solution is integrated in the Amadeus application using a mapping function for classifying new customers and make the Amadeus flight search engine return more personalized recommendations (Sec~\ref{subsection:mapping}). 

\subsection{Weighted Co-association Matrix based on Confidence}
\label{subsection:matrix}

In this process, instead of a classical co-association matrix, a weighted co-association matrix is constructed based on different factors like quality of the clustering solution and pair-wise confidence of two objects. 
The quality of a solution is measured by the average similarity of this solution in terms of Adjusted Rand Index (ARI) \cite{hubert:1985} with respect to the other solutions. 
To exemplify the pair-wise confidence of the objects remaining in a same cluster, suppose we have two clustering solutions of 9 objects with respectively 3 and 8 clusters represented as clusterings \{1,1,1,2,2,2,3,3,3\} and \{1,2,3,4,4,5,6,7,8\}. 
Two objects with the same label means that they are assigned to the same cluster in the clustering. 
We can see that the $4^{th}$ and $5^{th}$ objects are in the same cluster in both clustering solutions.  
However, since clustering solution 2 contains a larger number of clusters, the selection criteria for clustering is more restrictive than in solution 1.
This means that the confidence of assigning two objects to the same cluster varies depending on the number of partitions in the clustering solution.
A higher number of partitions means a better confidence in the grouping of objects in the same cluster. 
Therefore, both the quality and the number of clusters of a clustering solution are considered to build the weighted co-association matrix. 
This co-association matrix can be treated as a similarity matrix, where the edge weights depend on these two factors, being calculated on the basis of how many solutions agree to group two particular objects in the same clusters. 
Furthermore, the quality of the solution, in terms of average similarity with respect to the other solutions, and the confidence of two objects being members of the same clusters are also considered. 
In this process, the solutions that contain a higher number of clusters are given more weight than clustering solutions having less number of clusters. 
Besides, quality being a major issue in this purpose, therefore more weight is imposed on the quality metric (i.e., similarity) than the number of components. 
The weight due to the confidence and quality are added to compute the final weight as their similarity.

If $n$ is the number of base clustering and label of each object $j$ is represented by $r_j$, then the similarity $Sim(i,j)$ of two objects $i$ and $j$ is represented by using Eqn. \ref{Eq:coassociation_metric}.

\scriptsize{
\begin{equation}
Sim(i,j) = \sum_{p=1}^{n} (I(r_i=r_j)*cluster(p)) + 2*w \sum_{p=1}^{n} (I(r_i=r_j)*weight(p))
\label{Eq:coassociation_metric}
\end{equation}}
\normalsize

Here, $I$ represents an indicator function and it returns $1$ when the two objects have same labels, otherwise it returns 0. 
In this Eqn~\ref{Eq:coassociation_metric}, $cluster(p)$ is the number of clusters in the $p^{th}$ solution and $weight(p)$ is the weight measured by the similarity of $p^{th}$ solution with other base clustering solutions. 
Due to the difference of two ranges, i.e., variation in number of clusters and similarity value in terms of ARI, $w$ is used to make these two ranges in the same scale. 
To illustrate, consider there are $n$ number of base clustering solutions. 
So, the similarity value of each solution can be computed by comparing it with other $(n-1)$ solutions using ARI metric to obtain $(n-1)$ similarity values.
The average similarity value is calculated from these $(n-1)$ similarity values. 
At the same time, the number of clusters present in each clustering solution can be computed, and we thus measure the average number of clusters that can be present in a clustering solution. 
Now, $w$ is calculated by the ratio between these two values (i.e., ratio of average number of clusters with respect to the average similarity value).

The similarity matrix is then transformed into a graph where objects are vertices and edge weights can be regarded as the strength of bonding between the objects. 
This similarity matrix is then transformed into an adjacency matrix according to a threshold value, and a minimal threshold value is chosen by  $(1/t)$ fraction of maximum edge weight. 
In this step, the value of $t$ should be chosen in such a manner that each object cannot form a separate cluster. 
In our experiments, the value of $t$ was set to 10.
Then, the threshold value is increased by a small amount, and the induced deletion of edges produces another adjacency matrix.
From this adjacency matrix, the strongly connected components are extracted and the number of connected components is observed. 
This task is repeated for a number of time while varying the threshold value in step-wise manner (step size).
This fixed step size is chosen based on sensitivity analysis such that each object cannot produce a separate cluster. 
Therefore, in every step the adjacency matrix is formed and the number of connected components are derived. 
It should be noted that the number of connected components remains fixed for certain threshold values. 
Hence, obtaining the same number of connected components over successive iterations denotes the stability of this particular number of connected components since several clustering solutions agree regarding that specific number of components. 
A sorting over the number of connected components is performed depending on the stability values and the value is chosen from the clustering solutions having the highest similarity value. 
As earlier mentioned, the ARI \cite{hubert:1985} is used to measure the similarity of a clustering solution with respect to the others. 
The final number of clusters cannot be directly selected from the base clustering with highest average similarity since a few clustering solutions can have a high similarity but have a number of connected components without sufficient stability. 
Finally, the initial labeling of the base clusterings are changed according to this estimated number of clusters as described in the following subsections.

\begin{theorem}
The weighted co-association matrix of a clustering solution being calculated from the numbers of co-occurrences of two objects in the same cluster, the confidence of objects co-occurrences, and the quality of the clustering solution, the approach will generate a number of clusters that is at least as close to the number of clusters in the best potential agreement than the number of clusters calculated using a simple co-association matrix where only the numbers of co-occurrences of two objects in clusters are considered.
\end{theorem}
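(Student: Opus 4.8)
The plan is to prove the claim by exhibiting the simple co-association matrix as a degenerate special case of the weighted matrix in Eqn.~\ref{Eq:coassociation_metric}, and then arguing that the additional confidence and quality terms can only move the estimated number of clusters toward the best potential agreement. First I would fix notation: let $k^{*}$ denote the number of clusters in the best potential agreement, let $k_{w}$ denote the number produced by the weighted procedure, and let $k_{s}$ denote the number produced from the simple co-association matrix. I would characterize the best potential agreement as a clustering that maximizes the average similarity $weight(p)$ with respect to the other base clusterings, since this is exactly the quantity the procedure rewards; call this reference solution $c^{*}$, with $k^{*}$ clusters.

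Next I would observe that setting $cluster(p) \equiv 1$ and $w = 0$ in Eqn.~\ref{Eq:coassociation_metric} recovers precisely the simple co-association matrix $\sum_{p=1}^{n} I(r_i = r_j)$. Hence both procedures threshold matrices of the same form and extract connected components, differing only in the weight assigned to each base clustering. The key structural step is to show that the weighted similarity between two objects is a monotone increasing function of both the discriminativeness $cluster(p)$ and the quality $weight(p)$ of the solutions in which those objects co-occur. It then follows that pairs grouped by the high-quality, fine-grained solutions, i.e.\ those closest to $c^{*}$, persist at higher thresholds, whereas pairs supported only by coarse or low-quality solutions are separated earlier in the threshold sequence.

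From this monotonicity I would argue that the stability profile of the weighted graph, namely the number of connected components as a function of the threshold, is aligned with the partition structure of $c^{*}$, so that the stable, highest-quality component count selected by the procedure equals $k^{*}$ or lies no farther from $k^{*}$ than $k_{s}$. Concretely, I would establish that the contribution of each base clustering to the matrix is proportional to its proximity to the best potential agreement, so the weighted matrix biases the component count toward $k^{*}$; by contrast, the simple matrix weights all solutions uniformly and may be dominated by a majority of coarse or low-quality clusterings agreeing on a component count different from $k^{*}$. Combining these observations yields the desired inequality $|k_{w} - k^{*}| \leq |k_{s} - k^{*}|$.

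I expect the main obstacle to be the structural step linking the graph-theoretic component count to the weighting scheme: showing that a monotone reweighting of edges can only improve, and never worsen, proximity to $k^{*}$ will require either a monotonicity argument in the scaling parameter $w$ or an explicit analysis of how the increasing threshold interacts with the reweighted edges, with the degenerate cases in which several component counts are equally stable needing separate treatment. A secondary difficulty is giving a fully rigorous definition of the best potential agreement, as the informal notion must be tied precisely to the quality metric $weight(p)$ for the alignment argument to close.
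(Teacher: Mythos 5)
Your plan assembles the right ingredients --- the confidence/quality monotonicity of the weighted entries, the threshold sweep over connected components, and the observation that the simple matrix is the degenerate case $cluster(p)\equiv 1$, $w=0$ of Eqn.~\ref{Eq:coassociation_metric} (a reduction the paper never states explicitly, and a nice one) --- but the step you flag as the ``main obstacle'' is exactly the step that has to carry the whole proof, and your proposed route through it does not work as stated. You want to establish that each base clustering's contribution to the matrix is \emph{proportional to its proximity to the best potential agreement} $c^{*}$. But the quality term $weight(p)$ is the average ARI of solution $p$ against the \emph{other base clusterings}, not against $c^{*}$; a base solution can score highly there while being far from the best agreement (e.g., when the ensemble contains many redundant coarse solutions --- precisely the bias that the paper's second objective function, the standard deviation of similarities, is designed to counter). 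Moreover, exhibiting that the simple matrix \emph{may} be dominated by a majority of coarse clusterings shows only that $k_{s}$ can be bad; the theorem's inequality $\lvert k_{w}-k^{*}\rvert \le \lvert k_{s}-k^{*}\rvert$ requires showing the weighted procedure is \emph{never worse}, a uniform claim your bias argument does not deliver.

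The paper's own proof bridges this gap differently: it compares the two graphs uniformly over thresholds. Because the simple co-association matrix takes few distinct (integer count) values while the weighted one spreads edge weights over many distinct values, at every threshold $\delta_t$ the simple graph $G$ loses at least as many tied edges as the weighted graph $G^{\prime}$, giving $N^{\delta_t}_e(G)\ge N^{\delta_t}_e(G^{\prime})$, hence $N^{\delta_t}_c(G)\ge N^{\delta_t}_c(G^{\prime})$ and a slower decay of the component count, $\frac{d(N_c(G^{\prime}))}{dt}\le\frac{d(N_c(G))}{dt}$; the conclusion is then read off from the resulting stability of the component-count plateaus in $G^{\prime}$. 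If you want to complete your plan, the analogous move is to replace the proportionality claim by this tie-breaking/smoothness comparison, which is uniform over thresholds and never needs to tie $weight(p)$ to $c^{*}$ at all. Be aware, though, that even the paper's final step --- from the derivative inequality to proximity to the best agreement --- is itself an informal leap, so a fully rigorous argument would also need to formalize how plateau stability identifies $k^{*}$; this is the second difficulty you correctly anticipated, and neither your plan nor the paper resolves it.
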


\begin{proof}
Given two clustering solutions $S_p$ and $S_q$ with $m$ and $n$ number of clusters respectively, suppose the confidence of two objects $i$ and $j$ of remaining in same cluster are $Conf^{m}_{S_p}$ and $Conf^{n}_{S_q}$. 
Now according to the proposed model, the values in the co-association matrix not only depend on the number of co-occurrences of two objects being in a same cluster. 
Rather, along with this, the confidence of two objects and the quality of the solutions are also considered. 

Suppose $G$ and $G^{\prime}$ are the graphs constructed from normal co-association matrix (where only count of co-occurrence of two objects lying in same clusters is considered) and weighted co-association matrix (where co-occurrence, confidence and quality of the clustering solutions are considered). 
Accordingly to the proposed model, the confidence of two objects remaining in the same clusters is higher for clusterings with a greater number of partitions than for clusterings with lower number of partitions. 
Therefore, we write $Conf^{m}_{S_p}(i,j) \ge Conf^{n}_{S_q}(i,j)$ where $m > n$ and if $(i,j) \in C^{m}_{S_p}$ then $(i,j) \in C^{n}_{S_q}$. 
Here, $C^{m}_{S_p}$ and $C^{n}_{S_q}$ are two clusters in clustering solution $S_p$ and $S_q$ where the two objects $i$ and $j$ are the members. 

Now, the weighted co-association matrix is constructed based on confidence, co-occurrence and quality of the clustering solution. 
The edges are iteratively removed by increasing repeatedly the threshold $\delta_t$ by a very small amount and observing each time the number of connected components obtained. 
This number of connected components basically denotes the number of clusters. 
Let $N_e(G)$ and $N_e(G^\prime)$ be the number of edges deleted from graphs $G$ and $G^\prime$ each time.

Due to the weight in $G^\prime$ as mentioned previously, it can be written that, $\forall \delta_t$: $N^{\delta_t}_e(G) \ge N^{\delta_t}_e(G^\prime)$, where
$N^{\delta_t}_e(G)$ and $N^{\delta_t}_e(G^\prime)$ denote the number of edges $e$ with same weight $\delta_t$ in graph $G$ and $G^\prime$, respectively. 
Therefore, $\forall \delta_t$: $N^{\delta_t}_c(G) \ge N^{\delta_t}_c(G^\prime)$, where $N^{\delta_t}_c(G)$ and $N^{\delta_t}_c(G^\prime)$ are the number of connected components extracted from $G$ and $G^\prime$ for each value of threshold $\delta_t$.  
We can thus compare the changes in the number of connected components for any two successive iterations in these two graphs, and we have $\frac{d(N_c(G^\prime))}{dt} \le \frac{d(N_c(G))}{dt}$ for any two successive iterations. 
Hence, we obtain a number of clusters that is at least as close to the number of clusters in the best agreement than the number of clusters obtained using a classical co-association matrix, where only the co-occurrences of two objects in a same cluster are considered.
\end{proof}

\subsection{Label Transformation}
\label{subsection:label}

Label transformation aims to unify the labels of common clusters among different clustering solutions.
Let's consider two clustering solutions $c_i =$ $\{x_{1}^{i},$ $x_{2}^{i},$ $\ldots,$ $x_{p}^{i}\}$ and  $c_j = \{x_{1}^{j}, x_{2}^{j},$ $\ldots,$ $x_{p}^{j}\}$ that partition the $p$ customers into $k$ and $m$ clusters respectively. 
Here, each $x_{p}^{i}$ denotes the labeling of the $p^{th}$ object according to $c_i$ clustering solution. 
However, since the cluster labels are symbolic and generated by different processes, there is no correspondence between them in the different clusterings.
In order to represent the clustering solution $c_i$ according to the representation of clusters in solution $c_j$, the number of objects with different labels in $c_j$ and $c_i$ is counted for each label in $c_i$.
Majority voting is then performed among the labels to determine the final corresponding label in $c_i$. 

After estimating the most likely number of clusters, the base clustering solutions containing a higher number of clusters than the estimated value are transformed based on a reference solution containing this estimated number of clusters. 
As mentioned in the last section, the final number of cluster is selected based on the quality of the clustering solutions (highest similarity) of the input clustering. 
Hence, at least one clustering solution with that specific number of cluster should be present. 
To ensure the diversity in the search space, the original base clustering solutions are combined with the refined set of clustering solutions. 
After that, the multi-objective optimization algorithm is applied to derive the non-dominated Pareto optimal consensus solutions.

\subsection{Multi-objective Optimization Algorithm}
\label{subsection:multi-objective}

In this subsection, we outline the utilization of NSGA-II \cite{KDeb2002} with an aim to produce non-dominated near-Pareto optimal solutions. This is explained step-wise below.
\begin{itemize}
\item[$\bullet$] {\bf Encoding Scheme.} Here, the parameters in the search space are represented in the form of a string (i.e., chromosome). Each chromosome represents a clustering solution and chromosomes are encoded with integer value denoting the class label of each object. To exemplify, a chromosome is encoded like $\{r_1, r_2, \ldots, r_n \}$, where $r_i$ represents the class label of $i^{th}$ object. Suppose the encodings of two chromosomes are $\{3,3,2,2,2,1,1,1\}$ and $\{2,2,3,3,3,1,1,1\}$. Here both chromosomes represent the same clustering solutions where objects $\{1,2\}$ are in the single cluster, objects $\{3,4,5\}$ are in another clusters, and objects $\{6,7,8\}$ are in other clusters.

\item[$\bullet$] {\bf Initial Population.} In the initial population, the base clustering solutions obtained after applying different clustering algorithms are taken. In addition to that, the clustering solutions after refinement are included in it. In this way, the diversity of the clustering solutions is maintained. Note that, the number of clusters in each solution of the initial population is not necessary  the same.

\item[$\bullet$] {\bf Selection.} Each chromosome is associated with a fitness function that corresponds to the amount of goodness (fitness value) of the solution encoded in it. The competent chromosomes are selected for further breeding depending on the concept of survival of ``fittest''. In this context, crowded binary tournament is selected as the selection strategy \cite{KDeb2002}.

\item[$\bullet$] {\bf Crossover.} Crossover is a probabilistic procedure to exchange information between two parent chromosomes. In this paper, we use the same crossover operation as described in \cite{chatterjee2013}. Note that, in the clustering ensemble problem, we cannot directly apply the crossover operation. The reason is that two chromosomes can depict the same clustering solution (same fitness value) but with different representations. Then, applying single point/multi point crossover can distort the quality of these solutions. To explain this in more details, suppose there are two chromosomes representing same solutions  $\{2,2,2,1,1,3\}$ and $\{3,3,3,2,2,1\}$. If single point crossover is performed in $3^{rd}$ position, then the new chromosomes become $\{2,2,3,1,1,3\}$  and $\{3,3,2,2,2,1\}$. Hence, the fitness values are decreased although the original solutions were the same. To overcome this limitation we use a bipartite graph based approach described in \cite{chatterjee2013}.

\item[$\bullet$] {\bf Mutation.} In this operation, each chromosome goes through mutation with a slight probability $M_p$. Here, a small float value is added or subtracted to the label of the chromosomes. Note that, as the label of each object is an integer, after the mutation operation the float values obtained are converted into the nearest integer values.
\end{itemize}

In this algorithm, the two following objective functions are simultaneously optimized. 
The first is based on the ARI measure to consider the similarity of a clustering solution with other solutions. 
The second is based on the standard deviation of similarity values to identify potential bias toward a specific clustering solution.
Therefore, the objective functions are maximization of ARI similarity values and minimization of standard deviation among similarity values for a clustering solution with respect to other clustering solutions.


\subsection{Mapping Function}
\label{subsection:mapping}

This application aims to classify new customers in order to customize flight recommendations according to the customer's class.
Since no labeled data exist, the consensus clustering solutions are used in a semi-supervised manner.
Consensus clusters, corresponding each to a customer segment, are characterized to discriminate them, and new customers are classified by assigning them to the cluster they are the most similar to.
The characterization of each cluster is represented as the center point of the cluster from the sample features. 
That is a mean vector of feature values of samples in the cluster. 
The similarity between a new customer and each cluster is then computed based on these vectors, and the customer is assigned to the cluster with maximal similarity.

The mapping function for clustering solution $C^i$ is denoted by $f:x\mapsto \mathrm{argmin}_{\{1 \leq l \leq k \}} (d(\gamma_l,x))$, where $x$ is a new customer, $k$ is the number of clusters in the solution, and $\gamma_l = \sum_{j=1}^p I(l - x^i_j) x_j / \sum_{j=1}^p I(l - x^i_j) $ is the centre of cluster $l$ where $I(x) = 1$ for $x = 0$ and 0 $\forall x \in \mathbb{R}^*$.

\section{Experimental Design \& Results}

In this section, we first detail the Amadeus dataset. 
Then, we discuss the analysis of the proposed model with comparison to other existing models. 
In the experiments, the crossover rate is 0.9, mutation rate is 0.01 and population size is twice the number of the base clustering solutions.

\subsection{Dataset Description and Preprocessing}

To prepare the dataset, we extracted search queries of flight bookings for flights departing from the US during one week on January 2018. 
There are 9 relevant features: Distance between two airports, geography, number of passengers, number of children, advance purchase, stay duration, day of the week of the departure, day of the week of the return, and day of the week of search. 
In ``Geography" the values are 0 for domestic flights (origin and destination are in the same country), 1 for continental flights (origin and destination belong to the same continent), and 2 for intercontinental flights. 
As this dataset contains a very large number of customers (in millions), and as many of them have very similar feature values, the populations are divided into some strata based on similar characteristics. 
Then sampling is performed on these sub-population to generated a stratified sampling of the whole dataset while preserving the distribution properties of the original dataset. 
For example, snapshots of the distribution of the ``distance between airports'' feature values in the original dataset and in sample datasets are shown in Fig. \ref{Fig:Distribution_original_data}. 
Finally, three stratified sample datasets were generated with sizes of 500, 1000 and 1500 samples. 

\begin{figure}[hbt]
 \centering
 \includegraphics[width=.32\textwidth]{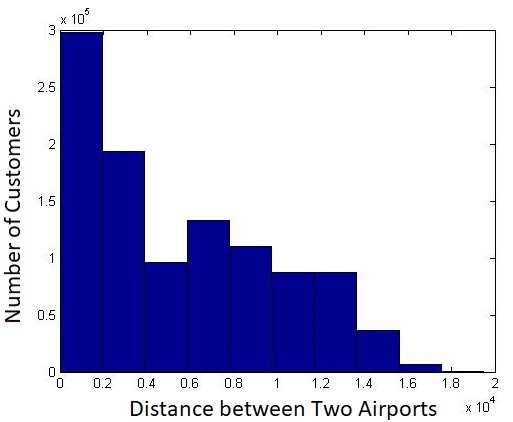}
 \includegraphics[width=.32\textwidth]{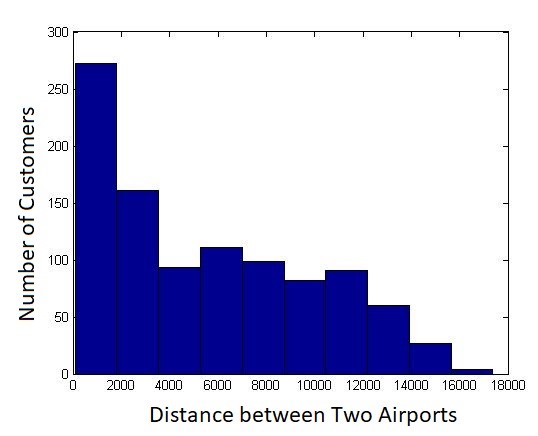}
 \includegraphics[width=.32\textwidth]{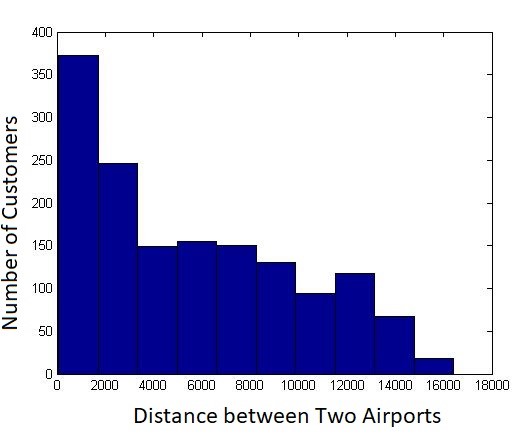}
 \caption{Snapshot of the distribution of feature ``distance between two airports'' for original dataset (Left), sample of size 1000 (Middle) and sample of size 1500 (Right) after stratified sampling.} 
 \label{Fig:Distribution_original_data}
\end{figure}

\subsection{Definition of Base Clusterings}
\label{subsection:base}

After generating the sampled datasets, the $K$-means algorithm was applied, with feature random subspace selection, for different values of $K$ to generate base clusterings with different numbers of clusters. 
Two sets of clustering solutions were generated for each sample-sized dataset. 
As there is no ground truth available for these datasets, the number of clusters parameter $K$ was not kept fixed, but was instead varied between two to eight while applying the clustering algorithms for generating the base clustering solutions. 
In the proposed approach, there is no constraint on the $K$ parameter values that are used to generate base clusterings. 
As described earlier, for each of the three sample-sized datasets, two different sets of base clusterings were generated. 
The first set of base clustering solutions was generated by applying $K$-means ten times, fixing the $K$ parameter for five (or six) of them, and varying the $K$ parameter to generate the remaining five (or four) clustering solutions. 
To generate the second set of base clustering solutions, the value of the $K$ parameter was varied for each $K$-means execution. 

\subsection{Experimental Results}
\label{subsection:experimental_results}

In the first experiment, the accuracy of consensus solutions produced by different methods is measured using the classical ARI metric. 
Since in this context no ground truth is available, internal validation of the solution is performed by comparing it to the base clustering solutions. 
The average similarity of the consensus solution with the base clustering solutions denotes the goodness of the method. 
We made an extensive analysis on different datasets and compared the algorithm with other state-of-the-art methods described in \cite{Huang:2015,strehl2002,Mimaroglu2012}. 
Since the weak evidence accumulation clustering (WEAC) method and four of it variants \cite{Huang:2015} were reported to outperform similar existing other approaches, we choose it as one of the methods for comparison purpose. 
The performance of the proposed model were also compared with the well-known classical methods like CSPA, HGPA, MCLA \cite{strehl2002}. 
If a very limited number of ensemble algorithms automatically estimates the number of clusters, the promising method DiCLENS \cite{Mimaroglu2012} was also used for comparison purposes. 
Specifically, DiCLENS is the most appropriate for comparison purposes as it produces the number of cluster automatically, similarly to the proposed method. 
The results on Amadeus travel dataset for the different samples sizes are given in Tables \ref{table:result_1}-6. 

In Tables \ref{table:result_1}-6, the consensus solution produced by each algorithm is compared with all the base clustering solutions in terms of ARI.
The average similarity obtained is reported in the tables along with the highlighting of the two best scores. 
As mentioned before, we generate two sets of base clustering solutions for each sample (as described in section \ref{subsection:base}). 
It can be realized that the better solution can be achieved if there is a knowledge about the number of cluster or passing $K$ value each time while executing the algorithm. 
In Tables \ref{table:result_1}-3, the result obtained for the first set of base clustering on the three sample datasets are reported and effectiveness of the method is shown. 
In Tables 4-6, the results obtained for the second set of base clustering for each sample, where each clustering solution contains different number of clusters which makes the problem more difficult, are reported. 
It can be observed that for both set of experiments, the proposed method gives consistently good performances even though the number of clusters is not given as input. 
Furthermore, the proposed method produces in several cases the same best quality clustering solution produced by the other methods. 
Also, the number of clusters predicted automatically by the proposed method is very close to the one that generates the best ensemble among all methods. 
Besides, defining the number of clusters for executing the other state-of-the-art approaches (like WEAC, CSPA, MCLA, etc) is extremely difficult because there is no initial knowledge of the value upto which the number of cluster should be varied. 
It is seen that in majority cases the proposed approach outperforms DiCLENS and produces equally good solutions when compared with any other state-of-the-art approaches in terms of ARI measure. 
The non-dominated Pareto front where each point corresponds to the rank-1 solution for different data are demonstrated in Fig. \ref{Fig:plot}. 
In these Pareto front all of the solutions are equally good, however, the final solution is selected based on the highest similarity with the base clusterings.

\begin{table}[hbt!]
\tiny
\parbox{.48\linewidth}{
\centering
\caption{Performance values on the base clustering with 500 samples. Here, six out of ten input clustering solutions contain five clusters and the other solutions contain three, four, six and seven clusters.}
\label{table:result_1}
  \begin{tabular}{|c|c|c|c|c|c|c|}
\hline
Algorithm   & K=3 & K=4  & K=5 & K=6 & K=7 & K=8\\\hline
CSPA  & 0.5919   &   0.5633   & 0.5539  & 0.6472 & 0.5388 & 0.4778  \\\hline
MCLA 	& 0.6117   & 0.7293   &  0.8218  & 0.7217 & 0.8066 & 0.6263 \\\hline
HGPA & 0.2176   & -0.0052   & 0.3447  & 0.2692 & 0.2388 & 0.0089  \\\hline
WEAC-SL  & 0.3972 & 0.6654 & {\bf 0.8275} & 0.8056 & 0.7924 & 0.7770 \\\hline
WEAC-AL & 0.3637 & 0.5964 & {\bf 0.8275} & 0.8066 & 0.7917 & 0.7683\\\hline
WEAC-CL & 0.6001 & 0.6654 & {\bf 0.8275} & 0.8149 & 0.8002 & 0.6913\\\hline
GP-MGLA & 0.6001 & 0.6939 & {\bf 0.8275} & 0.7240 & 0.6995 & 0.6731 \\\hline
DiCLENS & --   & --   & {\bf 0.8275} & -- & -- & -- \\\hline
Proposed & --   & -- & {\bf 0.8275}  & --  & -- & -- \\\hline
\end{tabular}

}
\hfill
\parbox{.48\linewidth}{
\centering
\caption{Performance values on the base clustering with 1000 samples. Here, six out of ten input clustering solutions contain seven clusters and the other solutions contain three, four, five and six clusters.}
   \begin{tabular}{|c|c|c|c|c|c|c|}
\hline
Algorithm   & K=4  & K=5 & K=6 & K=7 & K=8 & K=9\\\hline
CSPA  & 0.5132   &   0.5376   & 0.7162  & 0.7044 & 0.6201 & 0.5814  \\\hline
MCLA 	& 0.6025   & 0.6139   & 0.7822  & 0.8173 & 0.8139 & 0.7455 \\\hline
HGPA & -0.0030   & 0.2010   & 0.3302  & 0.4642 & -0.0048 & -0.0049  \\\hline
WEAC-SL  & 0.4768 & 0.6188 &  0.7490 & {\bf 0.8177} & 0.8140 & 0.8020 \\\hline
WEAC-AL & 0.3353 & 0.5507 &  0.7490 & {\bf 0.8177} & 0.8166 & 0.8043\\\hline
WEAC-CL & 0.6025 & 0.7184 &  0.7490 & {\bf 0.8177} & 0.8166 & 0.7964\\\hline
GP-MGLA & 0.6047 & 0.7184 &  0.7583 & {\bf 0.8177} & 0.7975 & 0.7788 \\\hline
DiCLENS & --   & 0.7183   & -- & -- & -- & -- \\\hline
Proposed & --   & -- &   & --  & {\bf 0.8177} & -- \\\hline
\end{tabular}
}
\end{table}

\begin{table}[hbt!]
\tiny
\parbox{.48\linewidth}{
\centering
\caption{Performance values on the base clustering with 1500 samples. Here, six out of ten input clustering solutions contain five clusters and the other solutions contain three, four, six and seven clusters.}
\label{table:result_2}
 \begin{tabular}{|c|c|c|c|c|c|c|}
\hline
Algorithm  & K=3 & K=4  & K=5 & K=6 & K=7 & K=8\\\hline
CSPA  & 0.4464   &   0.3896   & 0.4580  & 0.4002 & 0.3865 & 0.3504  \\\hline
MCLA 	& 0.4494   & 0.5476   & 0.4962  & 0.4352 & 0.3622 & 0.3589 \\\hline
HGPA & -0.0009   & 0.2474   & 0.3913 & 0.3544 & 0.2761 & 0.2743  \\\hline
WEAC-SL  & 0.4882 & {\bf 0.5584} & 0.5581 & 0.5573 & 0.5557 & 0.5531 \\\hline
WEAC-AL & 0.4049 & {\bf 0.5584} &  0.5567 & 0.5428 & 0.5391 & 0.5308\\\hline
WEAC-CL & 0.4882 & {\bf 0.5584} &  0.5581 & 0.5442 & 0.5359 & 0.4789\\\hline
GP-MGLA & 0.4882 & {\bf 0.5581} & 0.5025 & 0.5009 & 0.4866 & 0.4839 \\\hline
DiCLENS    & -- & {\bf 0.5581} & -- & -- & -- & -- \\\hline
Proposed & --   & {\bf 0.5584} & --  & --  & -- & -- \\\hline
\end{tabular}
}
\hfill
\parbox{.48\linewidth}{
\centering
\caption{Performance values on the base clustering with 500 samples. The 5 clustering solutions each contains different numbers of clusters ranging from 3 to 7.}
   \begin{tabular}{|c|c|c|c|c|c|c|c|}
\hline
Algorithm  & K=2 & K=3 & K=4  & K=5 & K=6 & K=7 \\\hline
CSPA  & -- & 0.5583   &  0.5644   & 0.5410  & 0.6214 & 0.5422  \\\hline
MCLA 	& --   & 0.5841  &  0.7088  & 0.6480 & 0.7263 & 0.5462  \\\hline
HGPA & 0.2176   & 0.3487   & -0.0054  & 0.1360 & 0.6188 & 0.4892   \\\hline
WEAC-SL  & 0.1847 & 0.3689 & 0.6283 & 0.6166 & {\bf 0.7425} & 0.7291 \\\hline
WEAC-AL & 0.0991 & 0.2950 &  0.5152 & {\bf 0.7525} & 0.7263 & 0.7211 \\\hline
WEAC-CL & 0.4638 & 0.5919 &  0.6945 & {\bf 0.7525} & 0.7263 & 0.7163\\\hline
GP-MGLA & 0.4638 & 0.5947 & 0.7088 & {\bf 0.7525} & 0.7263 & 0.7113 \\\hline
DiCLENS & 0.1847   & --   & -- & -- & -- & --\\\hline
Proposed & --   & -- & --  & 0.7378 & -- & --  \\\hline
\end{tabular}
}
\end{table}

\begin{table}[hbt!]
\tiny
\parbox{.48\linewidth}{
\centering
\caption{Performance values on the base clustering with 1000 samples. The 5 clustering solutions each contains different numbers of clusters ranging from 5 to 9.}
\label{table:result_3}
\begin{tabular}{|c|c|c|c|c|c|c|}
\hline
Algorithm    & K=4  & K=5 & K=6 & K=7 & K=8 & K=9\\\hline
CSPA  & 0.5278   &  0.5472  & 0.7094  & 0.6695 & 0.5985 & 0.5967  \\\hline
MCLA 	& 0.6497   & 0.6879  &  0.7772  & 0.6712 & 0.7502 & 0.6941 \\\hline
HGPA & -0.0030   & 0.3478   & 0.3814  & 0.5099 & -0.0047 & -0.0047  \\\hline
WEAC-SL  & 0.6038 & 0.6787 & 0.7722 & 0.7713 & 0.7863 & 0.7810\\\hline
WEAC-AL & 0.5247 & 0.6910 &  0.7716 & 0.7713 & 0.7711 & 0.7861\\\hline
WEAC-CL & 0.6485 & 0.7651 &  0.7722 &  0.7883 & {\bf 0.7882} & 0.7861\\\hline
GP-MGLA & 0.6484 & 0.7683 & 0.7722 & {\bf 0.7865} & 0.7724 & 0.7555 \\\hline
DiCLENS & --   & 0.7683   & -- & -- & -- & -- \\\hline
Proposed & --   & -- & --  & {\bf 0.7865}  & -- & -- \\\hline
\end{tabular}
}
\hfill
\parbox{.48\linewidth}{
\centering
\caption{Performance values on the base clustering with 1500 samples. The 5 clustering solutions each contains different number of clusters ranging from 3 to 7.}
  \begin{tabular}{|c|c|c|c|c|c|c|}
\hline
Algorithm  & K=3 & K=4  & K=5 & K=6 & K=7 \\\hline
CSPA  & 0.3978   &   0.4553   & 0.4985  & 0.4916 & 0.4262   \\\hline
MCLA 	& 0.4863   & 0.5438   &  0.5203 & 0.4957 & 0.3452  \\\hline
HGPA & 0.1735   & -0.0011   & -0.0015 & 0.3053 & 0.3305  \\\hline
WEAC-SL  & 0.2460 & 0.3661 & 0.4973 & 0.5516 & 0.5527  \\\hline
WEAC-AL & 0.1776 & 0.3371 & {\bf 0.5736} & 0.5714 & 0.5589 \\\hline
WEAC-CL & 0.5130 & 0.5522 & {\bf 0.5736} & 0.5714 & 0.5589 \\\hline
GP-MGLA & 0.5186 & 0.5389 & {\bf 0.5698} & 0.5615 & 0.5508\\\hline
DiCLENS    & -- & -- & 0.5581 & -- & --   \\\hline
Proposed & --  & -- & --  & 0.5553 & --    \\\hline
\end{tabular}
}
\end{table}

\begin{figure*}[hbt]
 \centering
 \includegraphics[width=0.45\textwidth]{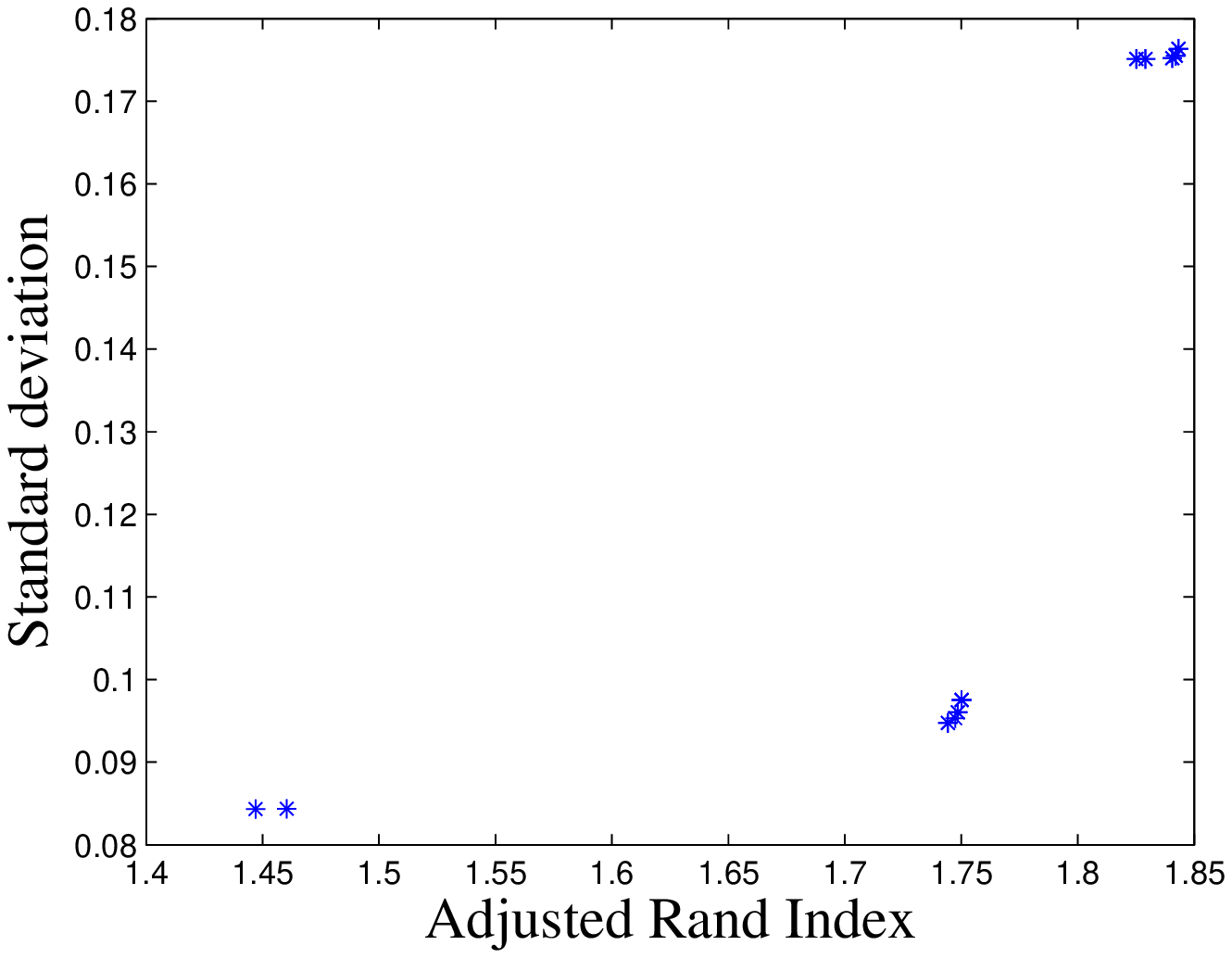} 
 \includegraphics[width=0.45\textwidth]{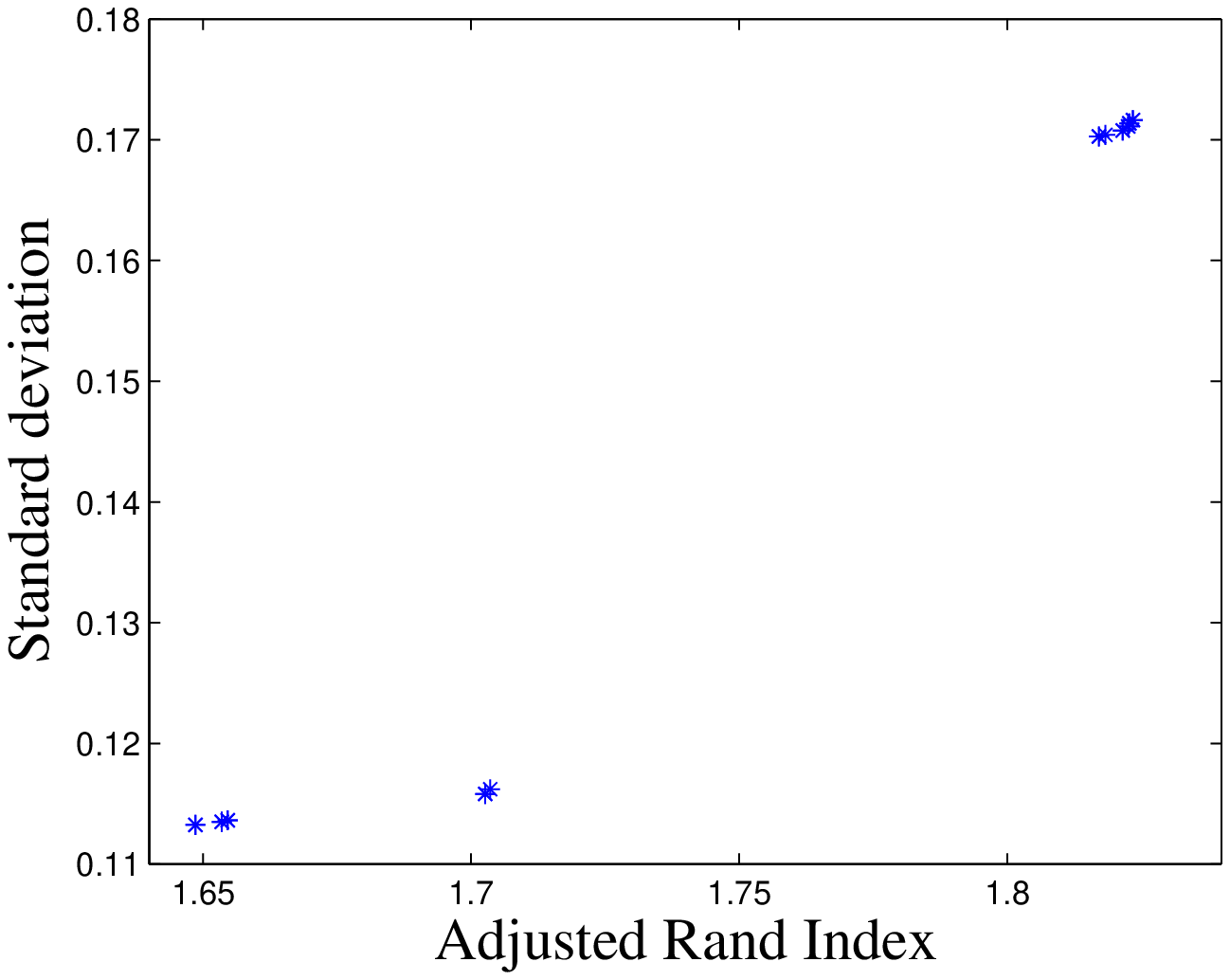}      
 \caption{Pareto front for Amadeus data of sample size 500 (Left) and 1500 (Right).}
 \label{Fig:plot}
\end{figure*}

In this work, consensus clustering techniques are integrated in the flight recommendation selection optimizer of the Amadeus flight search engine. 
Consensus clustering solutions are used for customer segmentation with the objective to optimize the selection strategy according to the diverse categories of users and their different needs. 
The flight selection depends on a quantity defined as a linear combination of different criteria, including the recommendation price and diverse convenience criteria of the flights. 
The linear combination of weights is optimized to maximize the booking probability of the returned recommendations. 
This booking probability is estimated using a Customer Choice Model \cite{lheritier2018} and a mapping function is necessary to assign a new customer to a particular cluster. 
If the mapping function described in Sect. \ref{subsection:mapping} was used, the $K$-Nearest Neighbors approach was also tested to predict the label of a new customer, by finding the $K$-closest customers of this new customer and performing a majority voting over their labels. 
However, considering only $K$ neighbors for predicting the label seems to induce some information loss, and therefore a better accuracy can be obtained using the method described in Sect. \ref{subsection:mapping}.

Experiments presented in Table~\ref{table:result_Amadeus} were conducted on the first set of base clustering solutions, along with the consensus solutions for 500 customers, to perform the optimization process.
The performances were then evaluated according to the Amadeus business metric used during this optimization process: The relative difference between the sum of all the booking probabilities of the flight recommendations returned by the optimized solution and by the reference solution. 
This reference solution is defined by setting all weights to zero except for the recommendation price, and it corresponds to the default configuration of the flight search engine. 
The Amadeus business metric indicates to which extent the optimized solution improves the attractiveness of the recommendations returned by the search engine. 
The obtained percentage reported in the table represents how much the proposed clustering technique improves the internal objective function used to select the flight recommendations in the flight search engine. 
Although there is not a direct link between this improvement and the conversion rate, this percentage represents a surrogate measure to it: The difference of conversion rate induced by the new configuration.

\begin{table}[hbt]
\tiny
\centering
\caption{Performance measure on booking probability improvement in terms of Amadeus business metric.}
\begin{tabular}{|c|c|c|c|c|c|c|}
\hline
Algorithm  & K=3 & K=4  & K=5 & K=6 & K=7 \\\hline
Base
clusterings & 49\% & 8.9\% & 24.4\%,21.6\%,12.2\%,4.4\%,21.5\%,18.6\% & 21.6\% & 21.7\% \\ \hline
CSPA  & 29.7\%   &  14\%   & 27.6\%  & 36.5\% & 19.4\%   \\\hline
MCLA 	& 22.6\%   & 19.7\%   &  12.7\% & 13.3\% & 27.7\%  \\\hline
HGPA & 31.3\%   & 13.5\%   & 31.9\% & 37.3\% & 22.5\% \\\hline
WEAC-SL  & 6.9\% & 36.8\% & 13.2\% & 11.6\% & 21.0\%  \\\hline
WEAC-AL & 22.7\% & 31.2\% & 19.8\% & 21.2\% & 12.8\% \\\hline
WEAC-CL & 20.8\% & 28.6\% & 32.2\% & 36.1\% & 26.6\% \\\hline
GP-MGLA & 16.8\% & 19.7\% & 30.0\% & 27.6\% & 29.6\% \\\hline
DiCLENS    & -- & -- & 28.6\% & -- & --   \\\hline
Proposed & --  & -- & 23.6\%  & -- & --    \\\hline
\end{tabular}
\label{table:result_Amadeus}
\end{table}

The proposed consensus clustering solution gives a better average improvement than most of the base clustering solutions. 
It is a good compromise among consensus solutions as it constantly gives good ARI values and its business metric is over the median of all other consensus methods. Additionally, it saves time compared to the current process in which we compare $N$ base clustering solutions based on the result of the optimization process, the processing time can be divided by $N$.
This is an important feature since the optimization process is the bottleneck part of the application.

It can be seen that some consensus algorithms such as HGPA or CSPA, give higher improvements in term of Amadeus business metric than the proposed method for some $K$ values. 
However, as shown in Tables~1-6, HGPA has a very low ARI, which indicates that it failed to combine the base clustering solutions. 
As it deviates significantly from base clustering solutions, we cannot rely on its solution, and a similar reasoning is applicable to other algorithms. 
In the current Amadeus process, retrieving the business metric for one clustering solution is time consuming, and it is not feasible to compute it for all consensus solutions before selecting one of them, as we did in this study. 
Therefore, we need to choose a reliable algorithm showing acceptable results in terms of both ARI and business metric. 
During the baseline experiments assuming some prior business knowledge about the main features, an improvement of 23.3\% was obtained, which is equivalent to the improvement for the proposed model that does not use any prior knowledge. 
We assume the solution is composed of 6 segments (Business domestic, Business international, Week-end domestic, Week-end international, Others domestic and Others international). 
Furthermore, this assumption that is not data-dependent may not be applicable for all search query datasets, depending on the market, the time period, etc. 
Hence, it is more reliable to depend upon multiple diverse clustering solutions and an appropriate consensus generation process.

\section{Conclusion}
In the travel industry, identifying segments of customers that have close needs and requirements is a key step for generating better personalized recommendation.
We propose a multi-objective optimization based consensus clustering framework to improve customer segmentation and provide better personalized recommendation in the Amadeus flight search engine.
This framework aims to overcome the issues encountered when the segregation of customers relies on a single clustering algorithm that is based on modeling assumptions that do not match, partly or entirely, with the data space regarding the number of clusters, distributions, etc. 
In the context of the selection of flight recommendations returned by Amadeus flight search engine, the proposed framework hold some properties required to generate relevant consensus clustering such as demonstrated by the theoretical proof of its adequate estimation of the number of clusters. 
This consensus clustering based solution was integrated in the Amadeus flight search engine, and its capability to generate better personalized recommendation, while reducing calls to the time consuming part of current Amadeus process, were demonstrated.
The efficiency of the proposed approach regarding application objectives and performances was also shown throughout experiments conducted on Amadeus customer search query data to compare it with other existing approaches. 
As a future direction, we intend to study how other objective functions can be deployed in order to obtain a better clustering solution aiming to improve booking probability.

 \bibliographystyle{splncs04}

\bibliography{New}
%
%
%
%
%
\end{document}